\documentclass[11pt]{article}
\usepackage{graphics,latexsym,amsfonts}

\usepackage{amsmath}
\usepackage{amssymb}
\usepackage{latexsym}
\usepackage{epsfig}
\usepackage[mathscr]{eucal}
\usepackage{eufrak}
\usepackage{graphics} \usepackage{color} \usepackage{multicol}
\usepackage{enumerate} \usepackage{array}

\setlength{\textwidth}{150mm}
\setlength{\textheight}{220mm}
\setlength{\oddsidemargin}{5mm}
\setlength{\evensidemargin}{5mm}
\setlength{\topmargin}{0mm}
\setlength{\headheight}{0mm}
\setlength{\headsep}{0mm}
\begin{document}
\newtheorem{lemma}{Lemma}
\newtheorem{definition}[lemma]{Definition}
\newtheorem{proposition}[lemma]{Proposition}
\newtheorem{theorem}[lemma]{Theorem}
\newtheorem{corollary}[lemma]{Corollary}
\newcommand{\bx}{\hfill \rule{2mm}{2.5mm}}

\newenvironment{proof}{\emph{Proof:}}{\bx}
\def\nl{\medskip \\ \noindent}
\def\np{

\medskip}
\setcounter{page}{1}
\normalsize

\title{\vspace{-5mm}{\bf Integer programming methods for special college admissions problems}}

\author{P\'eter Bir\'o$^1$\thanks{Visiting faculty at the Economics Department, Stanford University in year 2014. Supported by the Hungarian Academy of Sciences under its Momentum Programme (LD-004/2010) and also by OTKA grant no.\ K108673.} \ and
Iain McBride$^2$\thanks{Supported by a SICSA Prize PhD Studentship.
}
\\
\vspace{-2mm}
\\
\small
$^1$ \emph{Institute of Economics, Research Centre for Economic and Regional Studies,}
\vspace{-1mm}
\\
\small
\emph{Hungarian Academy of Sciences, H-1112, Buda\"orsi \'ut 45, Budapest, Hungary, and}
\vspace{-1mm}
\\
\small
\emph{Department of Economics, Stanford University}
\vspace{-1mm}
\\
\small
\emph{Email:} {\tt peter.biro@krtk.mta.hu}.
\\
\vspace{-2mm}
\\
\small
$^2$ \emph{School of Computing Science, University of Glasgow}
\vspace{-1mm}
\\
\small
\emph{Sir Alwyn Williams Building, Glasgow G12 8QQ, UK.}
\vspace{-1mm}
\\
\small
\emph{Email:} {\tt i.mcbride.1@research.gla.ac.uk}.
}
\date{ }
\maketitle

\vspace{-7mm}
\begin{quotation}
\small \noindent {\bf Abstract.}
We develop Integer Programming (IP) solutions for some special college admission problems arising from the Hungarian higher education admission scheme. We focus on four special features, namely the solution concept of stable score-limits, the presence of lower and common quotas, and paired applications. We note that each of the latter three special feature makes the college admissions problem NP-hard to solve. Currently, a heuristic based on the Gale-Shapley algorithm is being used in the application. The IP methods that we propose are not only interesting theoretically, but may also serve as an alternative solution concept for this practical application, and  other similar applications.
\end{quotation}

\begin{quote}
\textbf{Keywords: College admissions problem, integer programming, stable score-limits, lower quotas, common quotas, couples}\\
\textbf{JEL classification: C61, C63, C78}
\end{quote}


\section*{Introduction}

Gale and Shapley \cite{GS62amm} introduced and solved the college admissions problem, which generated a broad interdisciplinary research field in mathematics, computer science, game theory and economics\footnote{The 2012 Nobel-Prize in Economic Sciences has been awarded to Alvin Roth and Lloyd Shapley for the theory of stable allocations and the practice of market design.}. The Hungarian higher education admission scheme is also based on the Gale-Shapley algorithm, but it is extended with a number of heuristics since the model contains some special features. In this paper we will study the possibility of modelling these special features with integer programming techniques.

In the Hungarian higher education matching scheme (see a detailed description in \cite{www.matchinginpractice_hun_uni} and \cite{Biro08tr}), the students apply for programmes. However, for simplicity, we will refer to the programmes as colleges in our models. The first special feature of the application is the presence of ties, and the solution concept of stable score-limits. According to the Hungarian admission policy, when two applicants have the same score at a programme then they should either both be accepted or rejected by that programme. The solution of stable score-limits ensures that no quota is violated, since essentially the last group of students with the same score that would cause a quota violation is always rejected. A set of stable score-limits always exists, and a student-optimal solution can be found efficiently by an extension of the Gale-Shapley algorithm, as shown in \cite{BK13cejor}. This method is the basis of the heuristic used in the Hungarian application.

The second and third special features studied in this paper are the lower and common quotas. A university may set not just an upper quota for the number of admissible students for a programme, but also a lower quota. A violation of this lower quota would imply the cancellation of the programme. Furthermore, a common upper quota may be also introduced for a set of programmes, to limit the number of students admitted to a faculty, at a university or nationwide with regard to the state-financed seats in a particular subject. These concepts were studied in \cite{BFIM10tcs}, where the authors showed that each of these special features makes the the college admission problem NP-hard, as present in the Hungarian application. Finally, students can apply for pairs of programmes in case of teacher studies. This possibility was reintroduced to the scheme in 2010. This problem is closely related to the Hospitals / Residents problem with Couples, where couples may apply for pairs of positions. The latter problem is also known to be NP-hard \cite{Ronn90je}, even for so-called consistent preferences \cite{MM12joco}, and even for a specific setting present in Scotland \cite{BIS11jea} where hospitals have common rankings.

The polytope of stable matchings was described in a number of papers for the stable marriage problem \cite{Rothblum92mp}, \cite{RRV93mor}, and for the college admissions problem \cite{BB00mp}, \cite{Fleiner03mss} and \cite{STQ06mor}. For these classical models, since the extremal point of the polytopes are integral, and thus correspond to stable matchings, one could always use a linear programming solver to compute stable solutions (although the computation of solutions can also be done efficiently using the Gale-Shapley algorithm).

However, by introducing even just one special feature, the existence of a stable matching can no longer be guaranteed, and the problem of finding a stable solution can even be NP-hard. In such cases it may be worth investigating
integer programming techniques for solving these problems in theory and in practice as well. To the best of our knowledge there have been only two recent studies of this kind so far. In the first study Kwanashie and Manlove \cite{KM13wp} investigated the problem of finding a maximum size weakly stable matchings for college admissions problems with ties, a problem known to be NP-hard, and motivated by the Scottish resident allocation scheme\footnote{The same problem has also been investigated in a master's thesis \cite{Podhradsky10}.}. In the other paper \cite{BMcBM14wp} the above mentioned matching with couples problem has been studied. As we have already mentioned, one of the four special features of the Hungarian higher education scheme, namely the presence of paired applications, has a close connection to the problem of couples. However, the remaining three special features studied in our paper, the stable score-limits, the lower and common quotas have not been investigated from this perspective.

It is interesting to note that whilst we are not aware of any large scale application for two-sided stable matching markets, except two minor examples\footnote{In a famous study Roth \cite{Roth91aer} analysed the nature and the long term success of a dozen resident allocation schemes established in the UK in the late seventies. He found that two schemes produced stable outcomes and both of them remained in use. From the remaining six ones, that did not always produce stable matchings, four were eventually abandoned. The two programs that were not always produced stable solutions but yet remained used were based on linear programming techniques and has been operated for the two smallest market. \"Unver \cite{Unver01jedc} studied these programs and the possible reasons of their survival in detail.}, integer programming is the standard technique used for kidney exchange programs \cite{ABS07acm-ec}, \cite{RSU07aer} and \cite{MO12sea}.

Finally, we would like to highlight that the models and solution techniques presented in this college admission context may well be useful for other applications too. Two important other applications are immediately apparent. Firstly, controlled school choice \cite{AE07wp}, where the policy makers might want to improve the social-ethnic diversity of the schools by setting different quotas for some types of students. The other example is the resident allocation program, as used in Japan \cite{KK12aer}, where both lower and upper quotas can be requested as regional caps to ensure a better coverage in health care services in all geographic areas with regard to each medical specialty.

In section 1 we describe a basic model for the classical College Admissions problem that will be the basis of our extended models. In section 2 we consider the College Admissions problem with ties and describe two integer linear programs for finding a stable set of score-limits. The first model uses the objective function to achieve stability and thus leads to the student-optimal stable set of score-limits. The second model describes all the stable sets of score-limits using an extended IP model. In section 3 we formulate an IP for describing the College Admissions problem with lower quotas, we provide some useful lemmas that can speed up a solution, and we also extend the model for the case where the lower quotas are established for sets of colleges. In section 4 we study the College Admissions problem with common quotas and we give an integer programming model that describes the set of stable solutions. In section 5 we briefly describe the special feature of paired applications. Finally, in section 6, we analyse the possibility and difficulties of formulating general IP models to describe the stable solutions when some combination of special features are present in the application.

\section{A model for the classical College Admissions problem}

Our basic model is an extension of the Rothblum \cite{Rothblum92mp} model (analysed also in \cite{RRV93mor}). This model has been described in \cite{BB00mp}.

Let $A=\{a_1, \dots a_n\}$ be the set of applicants and $C=\{c_1, \dots , c_m\}$ the set of colleges. Let $u_j$ denote the upper quota of college $c_j$. Regarding the preferences and priorities, let $r_{ij}$ denote the rank of college $c_j$ in $a_i$'s preference list, meaning that $a_i$ prefers $c_j$ to $c_k$ if $r_{ij}<r_{ik}$. Let $s_{ij}$ be an integer representing the score of $a_i$ at college $c_j$, meaning that $a_i$ has priority over $a_k$ at college $c_j$ if $s_{ij}>s_{kj}$, where $\bar{s}$ is the maximum possible score. We denote the set of applications by $E$.

We introduce binary variables $x_{ij}\in \{0,1\}$ for each application coming from $a_i$ to $c_j$, as a characteristic function of the matching, where $x_{ij}=1$ corresponds to the case when $a_i$ is assigned to $c_j$. The feasibility of a matching can be ensured with the following two sets of constraints.

\begin{equation}
\label{eq:applicant_feasible}
\sum_{j: (a_i,c_j)\in E}x_{ij}\leq 1 \mbox{ for each } a_i\in A
\end{equation}

\begin{equation}
\label{eq:college_feasible}
\sum_{i: (a_i,c_j)\in E}x_{ij}\leq u_j \mbox{ for each } c_j\in C
\end{equation}

Here, \eqref{eq:applicant_feasible} implies that no applicant can be assigned to more than one college, and \eqref{eq:college_feasible} implies that the upper quotas of the colleges are respected.

One way to enforce the stability of a feasible matching is by the following constraint.

\begin{equation}
\label{eq:stable}
 \left(\sum_{k: r_{ik}\leq r_{ij}} x_{ik}\right)\cdot u_j + \sum_{h: (a_h,c_j)\in E, s_{hj}>s_{ij}}x_{hj}\geq u_j \mbox{ for each }(a_i,c_j)\in E
\end{equation}

Note that for each $(a_i,c_j)\in E$, if $a_i$ is matched to $c_j$ or to a more preferred college than the first term provides the satisfaction of the inequality. Otherwise, when the first term is zero, then the second term is greater than or equal to the right hand side if and only if the places at $c_j$ are filled with applicants with higher scores.\\

\textbf{Remark 1:} When we have ties in the priorities (due to equal scores), then the following modified stability constraints, together with the feasibility constraints \eqref{eq:applicant_feasible} and \eqref{eq:college_feasible}, lead to \emph{weakly stable} matchings (in a model also known as Hospitals/Residents problem with Ties).

\begin{equation}
\label{eq:stable_ties}
\left(\sum_{k: r_{ik}\leq r_{ij}} x_{ik}\right)\cdot u_j + \sum_{h: (a_h,c_j)\in E, s_{hj}\geq s_{ij}}x_{hj}\geq u_j \mbox{ for each }(a_i,c_j)\in E
 \end{equation}

Note that weakly stable matchings can have different sizes and the problem of finding a maximum size weakly stable matching is NP-hard (although some good approximation results exist, e.g. in \cite{IM08joco}). See more about this problem, and its solutions by IP techniques in the recent paper of Kwanashie and Manlove \cite{KM13wp}.\\

\textbf{Remark 2:} In the absence of ties, we can get an applicant-optimal (resp.\ an applicant-pessimal) stable solution by setting the objective function of the IP as the minimum (resp.\ maximum) of the following term:
$$ \sum_{(a_i,c_j)\in E}r_{ij}\cdot x_{ij}$$

We remind the reader that these extreme solutions can be obtained with the two versions of Gale and Shapley's deferred acceptance algorithm in linear time \cite{GS62amm}.\\

\textbf{Remark 3:} Ba\"iou and Balinski \cite{BB00mp} proposed an alternative model to describe the stable admission polytope, since the above simple integer program may admit fractional solutions as extreme point. See also Sethuraman et al.\ \cite{STQ06mor} about the alternative model. Fleiner \cite{Fleiner03mss} provided a different description for the stable admission polytope.

\section{Stable score-limits}

The use of score-limits (or cutoff scores) is very common in college admission systems. The applicants have a score at each place to which they are applying and they are ranked according to these scores by the colleges. The solution is announced in terms of score-limits, each college (or a central coordinator) announces the score of the last admitted student, and each student is then admitted to her most preferred place on her preference list where she achieved the score-limit. See more about the Irish, Hungarian, Spanish and Turkish applications in \cite{BK13cejor}. The score-limits can be seen as a kind of dual solution of a matching, or prices in a competitive equilibrium. Azevedo and Leshno \cite{AL12wp} analysed this phenomenon in detail.

In this section we first develop a basic model for the classical College Admissions problem by using score-limits. Then we discuss the case when ties can appear due to students applying to a place with the same score, as happens in Hungary. We show how this extended setting can be described with a similar IP both with and without the use of an objective function.

\subsection{Stable score-limits with no ties}

If we are given a stable matching for a College Admissions problem then we can define a stable set of score-limits by keeping the following requirements. Each student must meet the score-limit of the college where she is admitted, and no student meets the score-limit of a college that rejected her. These two requirements imply that every student is admitted to the best place where she achieved the score-limit. Finally, to ensure that no student is rejected if a quota was not filled we shall require the score-limit of each unfilled college to be minimal.

To describe this solution concept with an integer program, we introduce new variables for the score-limits. Let $t_j$ be the score-limit at college $c_j$, where $0\leq t_j\leq \bar{s}+1$. The feasibility constraints \eqref{eq:applicant_feasible} and \eqref{eq:college_feasible} remain the same, we only need to link the score-limits to the matching and establish the new stability conditions as follows.

\begin{equation}
\label{eq:score_stable_college}
t_j\leq \left(1-x_{ij}\right)\cdot (\bar{s}+1) +s_{ij} \mbox{ for each } (a_i,c_j)\in E
\end{equation}
and
\begin{equation}
\label{eq:score_stable_applicant}
s_{ij}+1\leq t_j +\left(\sum_{k: r_{ik}\leq r_{ij}}x_{ik}\right)\cdot (\bar{s}+1) \mbox{ for each } (a_i,c_j)\in E
\end{equation}

Here, \eqref{eq:score_stable_college} implies that if an applicant is admitted to a college then she achieved the score-limit of that college. The other constraint, \eqref{eq:score_stable_applicant}, ensures that if applicant $a_i$ is not admitted to $c_j$ then either her score at $c_j$ is lower than the score-limit, $t_j$, or she is admitted to a college that she preferred.

Finally, we need to ensure that each college that could not fill its quota has a minimal score-limit. We introduce an indicator variable $f_j\in\{0,1\}$ for each college $c_j$ which is equal to zero if the college is unfilled, by using the following constraint.

\begin{equation}
\label{eq:score_stable_filled1}
f_j\cdot u_j\leq \sum_{i: (a_i,c_j)\in E}x_{ij} \mbox{ for each } c_j\in C
\end{equation}

Then the following constraint ensures that if a college is unfilled then its score-limit is minimal.

\begin{equation}
\label{eq:score_stable_filled2}
t_j\leq f_j(\bar{s}+1) \mbox{ for each } c_j\in C
\end{equation}

We summarise the above statements in the following Theorem.

\begin{theorem}
The stable matchings and the related stable sets of score-limits of a College Admissions problem correspond to the solutions of the integer linear program consisting of the feasibility conditions \eqref{eq:applicant_feasible}, \eqref{eq:college_feasible}, the stability conditions \eqref{eq:score_stable_college}, \eqref{eq:score_stable_applicant} and conditions \eqref{eq:score_stable_filled1}, \eqref{eq:score_stable_filled2}.
\end{theorem}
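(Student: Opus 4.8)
The plan is to prove the claimed correspondence by establishing a bijection between feasible solutions of the integer program and pairs $(M, \{t_j\})$ consisting of a stable matching $M$ together with a stable set of score-limits. I would proceed in two directions. First, I would take an arbitrary integer solution $(x,t,f)$ satisfying all six constraints and show that the matching $M$ defined by $x_{ij}=1$ is stable and that $\{t_j\}$ is a valid stable set of score-limits. Conversely, I would take any stable matching $M$ with its associated stable score-limits $\{t_j\}$, set $x_{ij}=1$ exactly on the matched pairs, define $f_j$ appropriately, and verify that all six constraints hold.

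For the forward direction, feasibility of $M$ is immediate from \eqref{eq:applicant_feasible} and \eqref{eq:college_feasible}. The substantive work is to translate the two defining properties of stable score-limits. I would read \eqref{eq:score_stable_college} by case analysis on $x_{ij}$: when $x_{ij}=1$ the big-$M$ term vanishes and the constraint forces $t_j\le s_{ij}$, i.e.\ every admitted student meets the score-limit; when $x_{ij}=0$ the term $(\bar s+1)$ makes the inequality vacuous. Symmetrically, I would read \eqref{eq:score_stable_applicant} by case analysis on whether $a_i$ is matched to $c_j$ or something at least as preferred: if the inner sum is zero (she is matched strictly worse or unmatched), the constraint forces $s_{ij}+1\le t_j$, i.e.\ she fails to meet $c_j$'s score-limit; otherwise the term $(\bar s+1)$ makes it vacuous. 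Together these say precisely that each student is admitted to her best college at which she reaches the score-limit. Finally I would use \eqref{eq:score_stable_filled1} and \eqref{eq:score_stable_filled2} to argue that if $c_j$ is unfilled then $f_j=0$ is forced, whence $t_j=0$ is minimal, capturing the requirement that no student is needlessly rejected by an under-subscribed college.

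For the converse direction, given a stable matching with its score-limits I would define $f_j=1$ if $c_j$ is full and $f_j=0$ otherwise, and check each constraint against the two defining properties; this is essentially the reverse of the case analysis above and is routine once the forward direction is organized carefully. The main conceptual point I want to nail down is that stability of the matching and validity of the score-limits are genuinely equivalent notions here, so I would first recall (or briefly argue) the standard fact that a matching is stable in the classical sense exactly when a compatible stable set of score-limits exists; this lets me identify "stable matching plus stable score-limits'' with the data the IP encodes.

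The step I expect to be the main obstacle is handling the boundary cases in the big-$M$ constraints, particularly verifying that the score-limits are forced to be \emph{minimal} on unfilled colleges rather than merely admissible. The constraints \eqref{eq:score_stable_filled1} and \eqref{eq:score_stable_filled2} only pin down $t_j$ when the college is strictly undersubscribed; I must check that an unfilled college indeed forces $f_j=0$ (which needs $u_j\ge 1$ and that the sum is strictly below $u_j$), and conversely that the range $0\le t_j\le\bar s+1$ together with \eqref{eq:score_stable_applicant} still admits the genuine score-limit on filled colleges without over-constraining it. Once these edge cases are dispatched, the correspondence follows directly from the case analyses above.
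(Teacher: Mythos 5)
Your proposal is correct and follows essentially the same route as the paper: a constraint-by-constraint verification, in both directions, that \eqref{eq:score_stable_college} forces admitted students to meet the score-limit, \eqref{eq:score_stable_applicant} forces rejected students to miss it (so each student gets her best achievable college), and \eqref{eq:score_stable_filled1}--\eqref{eq:score_stable_filled2} force $f_j=0$ and hence $t_j=0$ at unfilled colleges. The paper's own proof is just a terser version of this same case analysis, and the boundary issues you flag (the big-$M$ terms and the forcing of $f_j=0$) do work out exactly as you anticipate.
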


\begin{proof}
A matching is feasible if and only if the corresponding solution satisfy the feasibility constraints. Condition \eqref{eq:score_stable_college} implies that if an applicant is admitted to a college then she achieved the score-limit of that college, and \eqref{eq:score_stable_applicant} implies that she has not achieved the score-limit of any college that she prefers to her assignment. Therefore, these two conditions are satisfied if and only if every applicant is admitted to the best place in her list where she achieved the score-limit. Finally, \eqref{eq:score_stable_filled1} and \eqref{eq:score_stable_filled2} ensure that no college can have positive score-limit, and therefore no college can reject any applicant, if its quota is not filled.
\end{proof}

\subsection{Stable score-limits with ties}

The problem with ties has been defined in \cite{Biro08tr} and studied in \cite{BK13cejor} and \cite{FJ14}. Ties can appear, as the scores of the applicants might be equal at a college, and these ties are never broken in the Hungarian application. Therefore a group of students with the same score are either all accepted or all rejected. In the Hungarian application the upper quotas are always satisfied, and the stability is defined with score-limits (cutoff scores) as follows. For a set of score-limits, each applicant is admitted to the first place in her list where she achieves the score-limit. A set of score-limits is feasible if no quota is violated. It is stable if no score-limit can be lowered at any college without violating its quota, while keeping the other score-limits unchanged. When no ties occur then this definition is equivalent to the original Gale-Shapley one. A stable set of score-limits always exists, and may be found using a generalised Gale-Shapley algorithm. Moreover the applicant proposing version leads to an applicant-optimal solution, where each of the score-limit at each college is as small as possible, and a similar statement applies for the college proposing-version (see \cite{Biro08tr}, \cite{BK13cejor} and \cite{FJ14} for details).

Here we present an IP formulation to find an applicant-optimal set of score-limits with respect to the Hungarian version (where no upper quota may be violated) this is called \emph{$H$-stable set of score-limits} in \cite{BK13cejor}. The feasibility constraints \eqref{eq:applicant_feasible} and \eqref{eq:college_feasible} remain the same as in the previous model, and also the two requirements regarding the score-limits, expressed in constraints \eqref{eq:score_stable_college}, \eqref{eq:score_stable_applicant}. However, we cannot require the unfilled colleges to have minimal score-limits in this model, since an unfilled seat might be created by a tie. We describe two possible solutions for this problem. The first is the use a simple objective function as follows.

\begin{equation}
\label{objective}
\min \sum_{j=1\dots m} t_j
\end{equation}

The above objective function is necessary to ensure the stability condition, that is no college can decrease its score limit without violating its quota, supposing that the other score-limits remain the same. To summarise, we state and prove the correctness of the integer program as follows.\\

\begin{theorem}
Feasibility conditions \eqref{eq:applicant_feasible}, \eqref{eq:college_feasible} and stability conditions \eqref{eq:score_stable_college}, \eqref{eq:score_stable_applicant} together with the objective function \eqref{objective} comprise an integer linear program such that the optimal solution of this IP corresponds to the applicant-optimal stable set of score-limits.
\end{theorem}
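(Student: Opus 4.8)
The plan is to split the argument into two parts: first a correspondence between integer-feasible solutions and feasible sets of score-limits together with their induced matchings, and then an optimality argument showing that the objective \eqref{objective} singles out the applicant-optimal one. The correspondence I would establish exactly as in the proof of the preceding theorem, since the constraints \eqref{eq:score_stable_college} and \eqref{eq:score_stable_applicant} are unchanged. Given any feasible $(x,t)$, unwinding the two ``big-$M$'' constraints shows that $x_{ij}=1$ forces $s_{ij}\ge t_j$ (the applicant meets the score-limit of her assigned college), while for every college $c_{j'}$ that $a_i$ strictly prefers to her assignment the sum in \eqref{eq:score_stable_applicant} vanishes and hence $s_{ij'}<t_{j'}$. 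Together with \eqref{eq:applicant_feasible} this means that $x$ is precisely the matching induced by $t$, in which each applicant takes the first place on her list where she reaches the score-limit; and \eqref{eq:college_feasible} then says exactly that this induced matching respects every upper quota, i.e.\ that $t$ is a feasible set of score-limits in the Hungarian sense. Conversely, any feasible set of score-limits with its induced matching is readily checked to satisfy all four constraints.

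The point I would stress here is that, in contrast to the no-tie case, we deliberately omit conditions \eqref{eq:score_stable_filled1} and \eqref{eq:score_stable_filled2}, so the feasible region now also contains non-stable sets of score-limits (for instance the all-maximal one), and the entire burden of selecting a stable solution is shifted onto the objective. Accordingly, I would next show that every optimal solution is stable. Since the feasible region is a non-empty finite set of integer points (each $t_j$ lies in $\{0,\dots,\bar s+1\}$, and $x$ is then determined), the minimum in \eqref{objective} is attained. If an optimal $t$ were not stable, then by definition some $t_j$ could be lowered, with the other score-limits held fixed, without violating any quota; by the correspondence this lowered set is again feasible, and it has strictly smaller objective value, contradicting optimality. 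Hence every optimal $t$ is an $H$-stable set of score-limits.

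Finally I would invoke the structural result of \cite{BK13cejor} that the applicant-optimal $H$-stable set $t^*$ exists and is pointwise minimal among all stable sets, so $t^*_j\le t_j$ for every stable $t$ and every $j$. In particular $t^*$ is feasible, so for the optimal (hence stable) $t$ we obtain $\sum_j t^*_j\le\sum_j t_j$; optimality forces equality, and combined with $t\ge t^*$ pointwise this yields $t=t^*$. Thus the unique optimal solution is $t^*$ together with its induced matching, as claimed.

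I expect the only genuine subtlety to lie in this last step: the objective minimises merely the sum of the score-limits, whereas applicant-optimality is a componentwise condition, so one must rule out that some non-stable feasible set achieves a smaller sum than the componentwise-minimal $t^*$. This is exactly what the lowering argument of the previous paragraph prevents, after which pointwise minimality of $t^*$ among stable sets does the rest. A minor accompanying observation worth recording is that lowering a single $t_j$ (with the others fixed) can only move applicants \emph{towards} $c_j$ and hence endangers only $c_j$'s own quota, which is what makes the per-college stability condition align cleanly with the coordinatewise reduction of the objective.
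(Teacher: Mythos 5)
Your proposal is correct and follows essentially the same route as the paper's own proof: the constraint-by-constraint correspondence between feasible IP solutions and feasible score-limit sets with their induced matchings, the observation that optimality of \eqref{objective} forces stability (any lowerable score-limit would yield a feasible solution of smaller objective value), and the appeal to pointwise minimality of the applicant-optimal stable set from \cite{BK13cejor} to conclude. Your version is merely more explicit than the paper's---notably in recording that lowering a single $t_j$ only endangers $c_j$'s own quota, and in the attainment/uniqueness details---which are welcome refinements rather than a different argument.
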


\begin{proof}
The feasibility constraints ensure that any binary solution of the IP corresponds to a feasible matching, where each applicant is admitted to at most one college and no quota is violated at any college. Conditions \eqref{eq:score_stable_college} and \eqref{eq:score_stable_applicant} ensure that each applicant is admitted to the first place in her preference list where she achieved the score-limit. In particular, \eqref{eq:score_stable_college} implies that if $a_i$ is admitted to $c_j$ then she must have reached the score-limit of $c_j$, and \eqref{eq:score_stable_applicant} implies that if $a_i$ is not admitted to $c_j$ or any better college of her preference (i.e.\ when the second term of the right hand side is zero) then $a_i$ could not achieve the score-limit of $c_j$. Finally, the objective function ensures that no college can decrease its score-limit (without violating its quota). However, this objective function also implies that the solution of the IP must correspond to the applicant-optimal stable set of score-limits, since for any other stable set of score-limits at least one college would have a higher score-limit and neither could have lower, so the sum of the score-limits would not be minimal.
\end{proof}

When we want to describe all sets of stable score-limits (and not only the applicant-optimal one) then we have to replace the objective function with some additional conditions. We can do that by introducing some new variables, as described in the next subsection.

\subsection{Stable score-limits with ties and free objective function}

First, we introduce a binary variable $y_j$ for each college $c_j$ that is equal to one when $t_j$ is positive (i.e.\ when there are some applicants rejected from $c_j$), and otherwise it is zero.

\begin{equation}
\label{zero}
t_j\leq (\bar{s}+1)y_j \mbox{ for each } c_j\in C
\end{equation}

Then, for each application $(a_i,c_j)$, we define a new binary variable, $d_{ij}$, that can be equal to one if $a_i$ desires $c_j$ compared to her actual match and $a_i$ would meet the admission criteria at $c_j$ if the score-limit at $c_j$ was decreased by one, where $m$ denotes the number of colleges.

\begin{equation}
\label{desires}
\sum_{r_{ik}\geq r_{ij}}d_{ik}\leq (1-x_{ij})m \mbox{ for each } (a_i,c_j)\in E
\end{equation}

\begin{equation}
\label{deserves}
t_j-1\leq (1-d_{ij})\bar{s}+s_{ij} \mbox{ for each } (a_i,c_j)\in E
\end{equation}

With the help of the new variables, we can now describe the stability condition of the score-limits as follows.

\begin{equation}
\label{stable with no obj}
(u_j+1)(1-y_j)+\sum_{i:(a_i,c_j)\in E} (x_{ij}+d_{ij})\geq u_j+1 \mbox{ for each } c_j\in C
\end{equation}

\begin{theorem}
Feasibility conditions \eqref{eq:applicant_feasible}, \eqref{eq:college_feasible}, stability conditions \eqref{eq:score_stable_college} and \eqref{eq:score_stable_applicant} and conditions \eqref{zero}, \eqref{desires}, \eqref{deserves} for the new variables $d_{ij}$, together with a new stability condition \eqref{stable with no obj} comprise an integer linear program such that each feasible integer solution corresponds to a stable set of score-limits.
\end{theorem}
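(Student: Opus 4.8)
The plan is to reuse as much as possible from the two preceding theorems and to concentrate the real work on the new constraints \eqref{zero}, \eqref{desires}, \eqref{deserves} and \eqref{stable with no obj}. First I would observe that the feasibility constraints \eqref{eq:applicant_feasible}, \eqref{eq:college_feasible} together with the score-limit constraints \eqref{eq:score_stable_college}, \eqref{eq:score_stable_applicant} are identical to those already analysed, so any integer solution yields a feasible matching in which every applicant is assigned to the first college on her list whose score-limit she attains. In particular \eqref{eq:score_stable_applicant} gives the contrapositive fact that if $a_i$ meets the score-limit of $c_j$ (that is $s_{ij}\geq t_j$) then $a_i$ is matched to $c_j$ or to a college she prefers; this is the consistency fact I would use repeatedly below.

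Next I would pin down the meaning of the auxiliary variables. Constraint \eqref{zero} forces $t_j=0$ whenever $y_j=0$, so $y_j=1$ at every college with a positive score-limit. For the $d$ variables, taking $k=j$ in \eqref{desires} shows that $x_{ij}=1$ forces $d_{ij}=0$ (so the sets counted by the $x_{\cdot j}$ and the $d_{\cdot j}$ are disjoint), and more generally that $d_{ik}=1$ is possible only for colleges $c_k$ strictly preferred by $a_i$ to her assignment, i.e.\ only if $a_i$ desires $c_k$. Combining this with \eqref{deserves}, which forces $s_{ij}\geq t_j-1$ whenever $d_{ij}=1$, and with the consistency fact (which gives $s_{ij}<t_j$ for any college $a_i$ desires), I would conclude that each applicant counted by $d_{ij}=1$ has score exactly $t_j-1$ at $c_j$ and would move to $c_j$ were its score-limit lowered by one. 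Hence $\sum_i(x_{ij}+d_{ij})$ counts distinct applicants and is a lower bound on the number that would occupy $c_j$ after such a decrease.

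The heart of the argument is then \eqref{stable with no obj}. For a college with $t_j=0$ there is nothing to lower, and the factor $(u_j+1)(1-y_j)=u_j+1$ makes the inequality vacuous, matching the fact that a zero score-limit is trivially stable. For a college with $t_j>0$ we have $y_j=1$, so \eqref{stable with no obj} reads $\sum_i(x_{ij}+d_{ij})\geq u_j+1$; since \eqref{eq:college_feasible} gives $\sum_i x_{ij}\leq u_j$, the entrants counted by the $d_{ij}$ push the occupancy of $c_j$ strictly above $u_j$. This is exactly the statement that $t_j$ cannot be lowered by one without violating the quota of $c_j$, which is the definition of a stable set of score-limits. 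Conversely, given a stable set of score-limits I would set $y_j=1$ iff $t_j>0$ and put $d_{ij}=1$ for every applicant who desires $c_j$ and has $s_{ij}\geq t_j-1$; these choices satisfy \eqref{zero}, \eqref{desires}, \eqref{deserves} by construction, and stability guarantees \eqref{stable with no obj}.

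The step I expect to be the main obstacle is the characterisation of the $d$ variables and, in the presence of ties, making the counting in \eqref{stable with no obj} exact: I must verify that an applicant with score exactly $t_j-1$ who desires $c_j$ is genuinely a new entrant upon lowering the limit, that such applicants are not already among those counted by the $x_{ij}$, and that no applicant is double-counted, so that $\sum_i(x_{ij}+d_{ij})\geq u_j+1$ really does certify a quota violation. I would also be careful to record that the operative notion of stability is the decrease-by-one one from the definition, so that a positive score-limit is forced to sit exactly one above the highest score of a rejected applicant who would move in; this is what makes the full correspondence, and not merely the implication, go through.
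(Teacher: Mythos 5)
Your proof is correct and follows essentially the same route as the paper's: both directions are argued by the same variable assignments ($y_j=1$ iff $t_j>0$; $d_{ij}=1$ exactly for applicants who desire $c_j$ and meet the limit lowered by one), with constraint \eqref{stable with no obj} certifying the quota violation. In fact your write-up is somewhat more careful than the paper's, since you explicitly establish that $d_{ij}=1$ forces $s_{ij}=t_j-1$, that the applicants counted by $x_{ij}$ and $d_{ij}$ are disjoint, and that the decrease-by-one test is equivalent to the general stability definition.
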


\begin{proof}
Again, the feasibility conditions ensure that the corresponding matching is feasible, if and only if the assignment of values to the variables in the IP is feasible. Similarly to the previous model, \eqref{eq:score_stable_college} implies that if an applicant is admitted to a college then she achieved the score-limit of that college, and \eqref{eq:score_stable_applicant} implies that each applicant is admitted to the best available place in her list, if admitted somewhere. Now we shall prove that the remaining four sets of conditions are satisfied if and only if the set of score-limits is stable, i.e.\ when no college can decrease its score-limit without violating its quota.
Suppose first that we have a stable set of score-limits. We assign values to all variables in the IP model appropriately and we prove that the constraints are satisfied. So let $t_j$ be the score-limit at college $c_j$ and we set $y_j$ to be one if $t_j$ is positive. Let $d_{ij}$ be equal to one for each applicant $a_i$ who would prefer to be matched to college $c_j$ than her current partner and who would also meet the score-limit at $c_j$ if it was decreased by one (i.e.\ $s_{ij}=t_j-1$), and we set all the other $d_{ij}$ variables to be zero. When doing so, we satisfy conditions \eqref{desires} and \eqref{deserves}, since \eqref{desires} is satisfied when no $d_{ik}$ is equal to one if $a_i$ prefers her current match $c_j$ to $c_k$, and \eqref{deserves} is satisfied if $a_i$ meets the score-limit at $c_j$ if it is decreased by one.  The stability of the set of score-limits means that no college (with a positive score-limit) can decrease its score-limit without violating its quota. This means that if a college has a positive score-limit, so $y_j=1$, then if it decreased its score-limit by one then the new students admitted were exactly those for whom the corresponding variable $d_{ij}$ is equal to one. The violation of the quota implies that \eqref{stable with no obj} must be satisfied.
To prove the converse, let us suppose that have an assignment of values to the variables in our IP model such that all the constraints in our IP model are satisfied. We shall prove that the set of score-limits as defined by variables $t_j$ is stable, that is, for each college $c_j$ either its score-limit is zero or the decrease of its score-limit would cause a quota violation. When $t_j$ is positive then $y_j$ must be zero, so $\sum_{i:(a_i,c_j)\in E} (x_{ij}+d_{ij})\geq u_j+1$. Since $d_{ij}$ can be one only if $a_i$ both desire and deserves $c_j$ when $t_j$ is decreased to $t_j-1$, this means that the quota at $c_j$ would be indeed violated when the score-limit would be decreased by one. (Note that $y_j$ does not necessarily have to be zero when $t_j$ is zero and  $d_{ij}$ does not necessarily have to be one when $a_i$ both deserves and desires $c_j$.)
\end{proof}

Therefore now we can compute both the student-optimal and student-pessimal stable score-limits, by setting the objective function as described in Remark 2. These extremal solutions can be also computed efficiently by the two generalised versions of the Gale-Shapley algorithm, as shown in \cite{BK13cejor}.

We note that in \cite{BK13cejor} there was another stability definition, the so-called L-stability, that is based on a more relaxed admission policy, namely when the last group of student with the same score with whom the quota would be violated are always accepted. In this paper we are focusing on the setting that is present in the Hungarian application, so we do not deal with L-stability, but it would be possible to describe an IP model for that version as well in a similar fashion.




\section{Lower quotas}

In this section we extend the classical College Admissions problem with the possibility of having lower quotas set for the colleges. After developing an integer program for finding a stable solution for this problem we describe the current heuristic used in Hungary and we also provide some Lemmas that can speed up the solution of the IP. Finally we discuss the possibility of having lower quotas for sets of colleges.

\subsection{College Admissions problem with lower quotas}

This problem has been defined in \cite{BFIM10tcs}. In addition to the College Admissions model, here we have lower quotas as well. Let $l_j$ be the lower quota of college $c_j$. In a feasible solution a college can either be closed (in which case there is no student assigned to there), or open, when the number of students admitted must be between its lower and upper quotas. To describe this feasibility requirement, besides keeping \eqref{eq:applicant_feasible}, we modify \eqref{eq:college_feasible} as follows. We introduce a new binary variable, $o_j\in \{0,1\}$ for each college $c_j$, where $o_j=1$ corresponds to the case when the college is open.

\begin{equation}
\label{eq:lower_feasible}
o_j\cdot l_j\leq \sum_{i: (a_i,c_j)\in E} x_{ij} \leq o_j\cdot u_j \mbox{ for each } c_j\in C
\end{equation}

The above set of constraints together with \eqref{eq:applicant_feasible} ensure the feasibility of the matching.

The stability of a solution requires the lack of traditional blocking pairs for open colleges, and the lack of blocking groups for closed colleges. The latter means that there cannot be at least as many unsatisfied students (unassigned or assigned to a less preferred place) at a college as the lower quota of that college. The stability conditions can be enforced with the following conditions.

\begin{equation}
\label{eq:lower_stable1}
 \left(\sum_{k: r_{ik}\leq r_{ij}}x_{ik}\right)\cdot u_j + \sum_{h: (a_h,c_j)\in E, s_{hj}>s_{ij}}x_{hj}\geq o_j\cdot u_j \mbox { for each } (a_i,c_j)\in E
\end{equation}

\begin{equation}
\label{eq:lower_stable2}
\sum_{i: (a_i,c_j)\in E} \left[1-\sum_{k: r_{ik} < r_{ij}}x_{ik}\right] \leq (1-o_j)\cdot (l_j-1) + o_j\cdot n \mbox { for each } c_j\in C
\end{equation}

The first condition implies the usual stability for open colleges, whilst the second condition implies the group-stability for closed colleges. Below we give a formal definition for the College Admission problem with lower quota and a proof of the above description in the following Theorem.

\begin{theorem}
The feasibility conditions \eqref{eq:applicant_feasible} and \eqref{eq:lower_feasible} together with the stability conditions \eqref{eq:lower_stable1} and \eqref{eq:lower_stable2} form an integer program such that its solutions correspond to the stable matchings of a college admissions problem with lower quotas.
\end{theorem}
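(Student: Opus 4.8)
The plan is to establish a two-way correspondence between the integer solutions of the program and the stable matchings, organised around a case split on the open/closed indicator $o_j$. First I would fix precise definitions: a matching $M$, encoded by the characteristic variables $x_{ij}$, is \emph{feasible} if each applicant is assigned at most once and each college is either empty (closed) or has between $l_j$ and $u_j$ assignees (open); it is \emph{stable} if every open college admits no classical blocking pair and every closed college admits no blocking group, i.e.\ for each closed $c_j$ the number of applicants who applied to $c_j$ and are either unassigned or assigned to a strictly less preferred college is at most $l_j-1$. With these fixed, the argument splits cleanly into a feasibility part and a stability part, each verified in both directions.

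For feasibility I would note that \eqref{eq:applicant_feasible} is unchanged and gives the at-most-one condition directly, while \eqref{eq:lower_feasible} forces $\sum_i x_{ij}=0$ when $o_j=0$ and $l_j\le\sum_i x_{ij}\le u_j$ when $o_j=1$. Hence every feasible integer solution determines a feasible matching together with the reading of $o_j$ as open/closed status, and conversely every feasible matching yields a feasible solution by setting $o_j=1$ precisely on the non-empty colleges. The single point I would record first is that each indicator sum appearing in \eqref{eq:lower_stable1} and \eqref{eq:lower_stable2}, namely $\sum_{k:r_{ik}\le r_{ij}}x_{ik}$ and $\sum_{k:r_{ik}<r_{ij}}x_{ik}$, is $0$ or $1$ because an applicant is matched at most once; the whole argument rests on reading these as the Boolean predicates ``$a_i$ is matched to $c_j$ or better'' and ``$a_i$ is matched strictly above $c_j$''.

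The stability part is then a case analysis. For an open college ($o_j=1$) the right-hand side of \eqref{eq:lower_stable1} equals $u_j$, so \eqref{eq:lower_stable1} reduces to the classical stability inequality \eqref{eq:stable}, which forbids blocking pairs exactly as argued for the basic model; meanwhile the right-hand side of \eqref{eq:lower_stable2} becomes $o_j\cdot n=n$, which dominates the left-hand side trivially, so \eqref{eq:lower_stable2} is vacuous. For a closed college ($o_j=0$) the right-hand side of \eqref{eq:lower_stable1} is $0$ and the inequality holds automatically, whereas \eqref{eq:lower_stable2} becomes $\sum_{i:(a_i,c_j)\in E}\left(1-\sum_{k:r_{ik}<r_{ij}}x_{ik}\right)\le l_j-1$; here the summand is $1$ precisely for the unsatisfied applicants of $c_j$, so the inequality states that there are fewer than $l_j$ of them, which is exactly the absence of a blocking group. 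Combining the two cases, the four constraints hold if and only if the matching is feasible and stable.

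The main obstacle I anticipate is not any single inequality but pinning down the blocking-group semantics for closed colleges and confirming that a pure count of unsatisfied applicants is the right certificate of instability. I must check that whenever at least $l_j$ applicants are unassigned or prefer $c_j$ to their assignment, the college can indeed be reopened and filled feasibly by admitting between $l_j$ and $u_j$ of the highest-scoring such applicants, so that this configuration genuinely destabilises $M$ and no score information is lost by counting alone. The complementary subtlety is to argue the vacuity claims rigorously, namely that \eqref{eq:lower_stable2} imposes nothing on open colleges and \eqref{eq:lower_stable1} nothing on closed ones, so that each college's stability is governed by exactly one of the two inequalities and the pair-blocking and group-blocking conditions do not interfere.
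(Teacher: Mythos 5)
Your proposal is correct and follows essentially the same route as the paper's proof: a feasibility correspondence via \eqref{eq:applicant_feasible} and \eqref{eq:lower_feasible}, followed by the case analysis on $o_j$ under which \eqref{eq:lower_stable1} reduces to the classical pairwise stability constraint for open colleges and is vacuous for closed ones, while \eqref{eq:lower_stable2} is vacuous for open colleges and bounds the number of unsatisfied applicants by $l_j-1$ for closed ones. The only difference is one of detail: the paper takes the count-based definition of group-stability as given (so your anticipated obstacle about whether a reopened college could feasibly absorb the blocking group is settled by definition, not by argument), and it asserts the redundancy/reduction claims without the explicit arithmetic you supply.
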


\begin{proof}
A solution of the IP satisfies the feasibility conditions \eqref{eq:applicant_feasible} and \eqref{eq:lower_feasible} if and only if the corresponding matching is feasible, i.e., no student is admitted to more than one college and the lower and upper quotas are respected in each open college. Regarding stability, condition \eqref{eq:lower_stable1} is redundant if the college is closed and implies the pairwise stability condition for any open college. Condition \eqref{eq:lower_stable2} is redundant for open colleges, and enforces group-stability for any closed college. To show the latter we shall see that the right hand side of the constraint is $l_j-1$ for a closed college $c_j$ and on the left hand side those applicants of $c_j$ are counted who are not admitted to any preferred place according to their preferences, so these are the students who would be happy if $c_j$ would be open and admit them.
\end{proof}

\subsection{Heuristics}

The problem of finding a stable matching for the college admissions problem with lower quotas is proved to be NP-hard \cite{BFIM10tcs}. In the Hungarian application, where lower quotas can be set for any programme, the following heuristic is used with regard to this special feature. First, the applicant-proposing Gale-Shapley algorithm produces a stable matching where some lower quotas might be violated. The heuristic closes one programme, where the ratio of the number of students admitted and lower quota is minimal, and then the applicant-proposing Gale-Shapley algorithm continues by letting the rejected students (whose assigned programme has just been cancelled) apply to their next choices. This heuristic runs in linear time in the number of applications, and it produces the applicant-optimal stable matching for the remaining open colleges. However, as it was illustrated in \cite{BFIM10tcs}, this heuristic can easily produce unstable outcomes even when the problem is solvable. With the IP technique, however, the IP model should guarantee to find a stable solution, whenever it exists. The following lemmas can help in speeding up the solver.

\begin{lemma}\label{compare}
Let $I$ be in instance of College Admission problem and $I'$ be a reduced market where a college is missing. Then the number of students admitted to any college in $I'$ must be at least as many as the number of students admitted in $I$. Moreover, when we compare the student-optimal (resp.\ college optimal) stable matchings in $I$ and $I'$ then each student gets admitted to a college at least as good in $I$ then in $I'$.
\end{lemma}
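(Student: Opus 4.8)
The plan is to prove all three assertions through the applicant-proposing and college-proposing deferred acceptance (Gale--Shapley) procedures, exploiting their standard monotonicity invariants: during an applicant-proposing run each applicant's tentative assignment can only move \emph{down} her preference list, while the set of applicants tentatively held by any college can only \emph{improve} and its cardinality can only \emph{increase}; the college-proposing run satisfies the dual invariants. I will also use two classical facts valid under strict priorities: the proposing side's optimal stable matching is produced \emph{independently of the order} in which proposals are issued, and (the Rural Hospitals Theorem) every stable matching of a fixed instance fills each college with the same number of applicants, so that ``the number of students admitted to $c_j$'' is well defined in both $I$ and $I'$.

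For the counting assertion I would write $M:=M_A(I)$ for the applicant-optimal stable matching of $I$, delete the missing college $c$, and free the applicants assigned to $c$ in $M$. Resuming the applicant-proposing procedure on $I'$ from this configuration (the freed applicants continue proposing down their lists) is a legal partial run of Gale--Shapley on $I'$, so by order-independence it terminates at the applicant-optimal stable matching $M':=M_A(I')$. Throughout this cascade the number of applicants held at every remaining college is non-decreasing, and each such college begins the cascade holding exactly its $M$-assignment; hence $|M'(c_j)|\ge|M(c_j)|$ for every $c_j\neq c$, and by the Rural Hospitals Theorem the same inequality then holds for all stable matchings of the two instances.

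The very same cascade settles the student-optimal half of the second assertion: since each applicant only moves down her list, her $M'$-college is no better than her $M$-college, i.e.\ every applicant is weakly better off in $I$ than in $I'$. For the college-optimal half I would run the dual argument: start from the college-optimal stable matching $M_C(I')$ of the reduced market, reinsert $c$, and let $c$ begin proposing under the college-proposing procedure while all other colleges continue from their $M_C(I')$ state. By order-independence this terminates at $M_C(I)$, and the college-proposing invariant guarantees that each applicant's held offer only improves during the cascade triggered by $c$; hence every applicant is weakly better off in $M_C(I)$ than in $M_C(I')$, as required.

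The step that needs the most care is the order-independence claim in each direction: one must check that the modified starting configuration --- the applicant-optimal matching of $I$ with $c$ deleted, respectively the college-optimal matching of $I'$ with $c$ reinserted but idle --- is genuinely reachable as an intermediate state of a valid Gale--Shapley run on the \emph{new} instance, so that resuming the procedure is guaranteed to output the \emph{optimal} stable matching and not merely some stable matching. Once this is granted, everything else is immediate from the monotonicity invariants. I would also note that these arguments presume strict priorities; when ties are present one first fixes a consistent tie-breaking (equivalently, one works with the stable score-limits of Section 2) and then applies the identical reasoning.
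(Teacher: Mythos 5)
Your proof takes essentially the same route as the paper's: both invoke the Rural Hospitals theorem and then resume the applicant-proposing Gale--Shapley proposal-rejection sequence from the student-optimal matching of $I$ after deleting the college (and, dually, restart the college-proposing process from the college-optimal matching of $I'$ after reinserting it), reading off the count and welfare comparisons from the algorithm's monotonicity. If anything you are more careful than the paper, which asserts the resumption argument without flagging the reachability/order-independence point that you correctly identify as the crux of making the ``optimal-vs-optimal'' comparison rigorous.
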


\begin{proof}
By the Rural Hospitals theorem (\cite{GS85dam}, \cite{Roth84jpe} and \cite{Roth86e}) we know that in the College Admissions problem each college admits the same number of students in every stable matching. Suppose first that we consider the student-optimal stable matching in $I$. When we remove a college then we can invoke the proposal-rejection sequence used in the Gale-Shapley student proposing algorithm and obtain a new stable matching where each college admits at least as many students as before and each student is admitted to the same or a worse place (or nowhere). Suppose now that we consider the college optimal stable matching for $I'$. When we add back the missing college and restart the college-proposing Gale-Shapley process then we will obtain a new stable solution that is at least as good for each student as the previous one.
\end{proof}

\begin{lemma}\label{lower1}
The colleges that reach their lower quotas in the stable solutions of a College Admissions problem with no lower quotas must be open in every stable solution where lower quotas are respected.
\end{lemma}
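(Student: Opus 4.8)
The plan is to argue by contradiction, leaning on the Rural Hospitals theorem and on Lemma~\ref{compare}. Fix a college $c_j$ whose common number $\nu_j$ of admitted students in the stable matchings of the underlying problem \emph{without} lower quotas (call this market $I$; the number $\nu_j$ is well defined by the Rural Hospitals theorem) satisfies $\nu_j\ge l_j$. Suppose, for contradiction, that $c_j$ is closed (so $o_j=0$) in some stable solution $M$ of the problem with lower quotas, and let $O$ denote the set of colleges that are open in $M$. The first step is to observe that, since closed colleges carry no students, the restriction $M|_O$ is a feasible matching of the classical market $I'$ whose college set is exactly $O$ (with lower quotas dropped), and that $M|_O$ is in fact stable there: condition \eqref{eq:lower_stable1} forbids blocking pairs at every open college, and in $I'$ there are no other colleges.

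Next I would introduce the market $I''=I'\cup\{c_j\}$, i.e.\ the open colleges together with $c_j$ (again ignoring lower quotas). Since $I''$ is obtained from $I$ by deleting only the colleges that are closed in $M$ \emph{other} than $c_j$, iterating Lemma~\ref{compare} shows that the number of students admitted to $c_j$ in the stable matchings of $I''$ is at least $\nu_j$; by the Rural Hospitals theorem this number is the same in every stable matching of $I''$, and it is at least $\nu_j\ge l_j\ge 1$. In particular $c_j$ is nonempty in every stable matching of $I''$, so the matching $M$ (viewed in $I''$ with $c_j$ empty) cannot itself be stable in $I''$.

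The key step is then to produce a specific stable matching $M_2$ of $I''$ in which the occupants of $c_j$ are precisely students who would want $c_j$ relative to $M$. I would build $M_2$ exactly as in the proof of Lemma~\ref{compare}: start from the stable matching $M|_O$ of $I'$, add $c_j$ back, and restart the college-proposing Gale--Shapley process with $c_j$ entering. Along the resulting rejection/vacancy chains every student only ever moves to a strictly better college, so each student is weakly better off in $M_2$ than in $M|_O$ (equivalently, than in $M$, since $M$ and $M|_O$ assign each student the same thing); consequently any student assigned to $c_j$ in $M_2$ must strictly prefer $c_j$ to her assignment in $M$, and a student unassigned in $M$ who lands at $c_j$ likewise applied to $c_j$. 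Hence every one of the at least $l_j$ students that $c_j$ admits in $M_2$ is a student who applied to $c_j$ and is not matched in $M$ to any college she prefers to $c_j$.

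Finally I would read off the contradiction from the group-stability constraint \eqref{eq:lower_stable2}. For the closed college $c_j$ its right-hand side equals $l_j-1$, while its left-hand side counts exactly the applicants to $c_j$ who in $M$ are not admitted to any place they prefer to $c_j$ --- that is, the very students identified above. We have exhibited at least $l_j$ such students, so the left-hand side is at least $l_j>l_j-1$, violating \eqref{eq:lower_stable2} and contradicting the stability of $M$. I expect the main obstacle to be this key step: it is not enough to know from counting that \emph{some} stable matching of $I''$ fills $c_j$ with at least $l_j$ students, because an arbitrary such matching could populate $c_j$ with students who in $M$ already sit at colleges they prefer, and who therefore do not form a blocking group. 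The comparative-statics (vacancy-chain) argument, started from the actual matching $M|_O$, is needed precisely to guarantee that the students $c_j$ attracts genuinely prefer it to their situation in $M$.
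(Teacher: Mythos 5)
Your proof shares the paper's skeleton: argue by contradiction, use Lemma~\ref{compare} (iterated college by college) to show that $c_j$ still fills its lower quota in the market consisting of the open colleges plus $c_j$, and conclude by exhibiting at least $l_j$ applicants of $c_j$ who are matched in $M$ to nothing they prefer to $c_j$, violating the group-stability constraint \eqref{eq:lower_stable2}. Where you diverge is the direction of the comparative-statics step. The paper works in the applicant-proposing direction: it takes the applicant-optimal stable matching of your $I''$ (its $I_{X\setminus j}$), deletes $c_j$, and continues the applicant-proposing deferred acceptance to reach the applicant-optimal stable matching of your $I'$ (its $I_X$); the at least $l_j$ students who occupied $c_j$ end up strictly below $c_j$ there, and since the applicant-optimal matching of $I'$ weakly dominates \emph{every} stable matching of $I'$ --- in particular $M|_O$ --- from the students' side, these students also prefer $c_j$ to their seats in $M$. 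You instead work in the college-proposing direction, starting from $M|_O$ itself and letting $c_j$ enter via vacancy chains, which makes the ``these students genuinely prefer $c_j$ over their situation in $M$'' conclusion immediate.

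The caveat is that your key step is not, as you claim, ``exactly as in the proof of Lemma~\ref{compare}''. That proof restarts the college-proposing Gale--Shapley process from the \emph{college-optimal} stable matching of the smaller market, so the restarted run is just a continuation of an ordinary deferred-acceptance execution, and stability of the outcome follows from the correctness and order-independence of the algorithm. You restart from an \emph{arbitrary} stable matching $M|_O$, which in general is not reachable by any college-proposing execution, and your counting argument really does need $M_2$ to be stable in $I''$ (otherwise the Rural Hospitals theorem cannot be applied to it, and the bound of $l_j$ occupants of $c_j$ evaporates). The statement you need --- that after a college enters at an arbitrary stable matching, the college-proposing re-equilibration process terminates in a stable matching of the enlarged market in which every student is weakly better off --- is true, but it is a genuinely stronger fact: it is the vacancy-chain/re-equilibration theorem of Blum, Roth and Rothblum for quasi-stable matchings, and neither the paper's lemmas nor your proposal establish it. So either invoke that theorem explicitly, or switch to the paper's applicant-proposing comparison, where applicant-optimality of the reduced market's outcome is precisely what lets you dominate the arbitrary matching $M|_O$ without ever having to start the dynamics from it.
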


\begin{proof}
Suppose now for a contradiction that a college $c_j$ that reaches its lower quota for the original problem where all the colleges are open, denoted by $I$, but there is a stable solution where $c_j$ is closed. Let $X$ denote the set of closed colleges in this solution, where $c_j\in X$, and let us denote the submarket where colleges in $X$ are closed by $I_X$. For the market where every college in $X$ but $c_j$ is closed, denoted by $I_{X\setminus j}$, $c_j$ must still reach its lower quota by Lemma \ref{compare}. Furthermore, when we consider the applicant-optimal stable matching for $I_{X\setminus j}$, when we remove $c_j$ and conduct the applicant proposing deferral-acceptance process of Gale-Shapley, as in the Hungarian application, in the resulting stable matching for $I_X$ the students who were previously matched to $c_j$ are all worse off. Therefore for any pairwise stable matching in $I_X$ these students would block the matching with college $c_j$, since thew all prefer $c_j$ to their current match and they are at least as many as the lower quota of $c_j$.
\end{proof}

\begin{lemma}\label{lower2}
Suppose that $X$ is the set of colleges that do not reach their lower quotas in the stable solutions with no lower quotas. Given a college $c_j$ of $X$, if all the colleges in $X$ but $c_j$ are closed and $c_j$ still does not achieve its lower quota then $c_j$ must be closed in any stable solution with lower quotas.
\end{lemma}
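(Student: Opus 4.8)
The plan is to argue by contradiction using a monotonicity principle: closing colleges can only increase the number of students admitted to those that remain, so if $c_j$ cannot be filled even when all of $X\setminus\{c_j\}$ is closed, it cannot be filled in any configuration where fewer colleges are shut. First I would suppose, for contradiction, that there is a stable solution respecting the lower quotas in which $c_j$ is open. Let $Y$ denote the set of colleges that are closed in this solution; since $c_j$ is open we have $c_j\notin Y$, and feasibility \eqref{eq:lower_feasible} forces $c_j$ to admit at least $l_j$ students.

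The first key step is to locate $Y$ inside $X$. Every college outside $X$ reaches its lower quota already in the stable matchings of the market with no lower quotas, so by Lemma \ref{lower1} it must be open in every stable solution with lower quotas. Hence $Y\subseteq X$, and combined with $c_j\notin Y$ this gives $Y\subseteq X\setminus\{c_j\}$.

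The second step is to read the number of students at $c_j$ off a classical stable matching. Restricting the assumed solution to its open colleges yields a matching in the submarket $I_Y$ (all colleges of $Y$ deleted) that has no blocking pair by \eqref{eq:lower_stable1}, i.e.\ a stable matching of the classical instance $I_Y$; in it $c_j$ receives at least $l_j$ students. By the Rural Hospitals theorem this count is the same in every stable matching of $I_Y$. Now pass from $I_Y$ to $I_{X\setminus j}$ by deleting the remaining colleges of $(X\setminus\{c_j\})\setminus Y$ one at a time; each deletion, by Lemma \ref{compare}, can only increase the number of students admitted to $c_j$. Therefore $c_j$ admits at least $l_j$ students in $I_{X\setminus j}$, i.e.\ it reaches its lower quota there, contradicting the hypothesis. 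This forces $c_j$ to be closed in every stable solution with lower quotas.

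I expect the main obstacle to be the bookkeeping that justifies $Y\subseteq X$ and the identification of the restricted matching as a genuine stable matching of $I_Y$, so that the Rural Hospitals theorem and Lemma \ref{compare}, both phrased for classical instances, actually apply. Once these two reductions are in place, the monotonicity chain from $I_Y$ to $I_{X\setminus j}$ is routine.
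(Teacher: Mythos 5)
Your proof is correct and follows essentially the same route as the paper's: argue by contradiction, use Lemma \ref{lower1} to place the set $Y$ of closed colleges inside $X\setminus\{c_j\}$, and then apply the monotonicity of Lemma \ref{compare} to compare $I_Y$ with $I_{X\setminus j}$. If anything, your write-up is the more careful one: you state the monotonicity inequality in the correct direction (deleting the colleges of $(X\setminus\{c_j\})\setminus Y$ can only increase the number of students at $c_j$, so $c_j$ would reach $l_j$ in $I_{X\setminus j}$, contradicting the hypothesis), where the paper's text phrases this comparison the wrong way round, and you make explicit the bookkeeping the paper leaves implicit, namely that the restriction of the assumed solution to open colleges is a classical stable matching of $I_Y$, that the Rural Hospitals theorem fixes the count at $c_j$ across all stable matchings of $I_Y$, and that Lemma \ref{compare} is applied one college at a time.
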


\begin{proof}
As we have seen in Lemma \ref{lower1}, no college outside $X$ may be closed in any stable solution, therefore for any stable solution a subset of $X$ should be closed. Suppose for a contradiction that we have a stable matching with lower quotas where $Y\subset X$ is the set of closed colleges and $c_j\notin Y$, meaning that $c_j$ also reaches its lower quota in this matching. Let $Y'=X\setminus c_j$, our assumption is that $c_j$ does not reach its lower quota in the stable matchings when the colleges in $Y'$ are closed. Since $Y\subseteq Y'$, Lemma implies that $c_j$ has at least as many students admitted in the stable matchings for $I_Y$ (the market when the colleges in $Y$ are closed) than for $I_{Y'}$, a contradiction.
\end{proof}

With the help of lemmas \ref{lower1} and \ref{lower2} we can iteratively find some colleges which must be open and perhaps also some that must be closed in any stable solution, thus reduce the number of variants of our model, as follows. First we run the Gale-Shapley algorithm without lower quotas, and then we set each college that reached its lower quota to be open. Let us denote this set of colleges by $X_1$. In the second step we check each college in $C\setminus X_1$ whether it can be open in any stable solution, as described in Lemma \ref{lower2}. That is, for each  $c_j\in C\setminus X_1$ we close all the colleges $C\setminus (X_1\cup c_j)$, run the Gale-Shapley and check whether $c_j$ reaches its lower quota. If not then we set $c_j$ to be closed. Let $Y_1$ denote the set of colleges that were found to be necessary to close in the second step. If $Y_1$ is nonempty then we repeat the first step, we run again the Gale-Shapley algorithm with no lower quotas and without colleges $Y_1$. If, in addition to $X_1$, some new colleges also achieve their lower quotas in the reduced market then we add them to $X_1$ and get a larger set of colleges, $X_2$, that must be open in every stable solutions. In case $X_2$ is larger than $X_1$ then we shall repeat the second step of our process with respect to $X_2$. Again, if we find any new college that must be closed then we increase the set $Y_1$ to $Y_2$. We repeat this process until $X_t=X_{t+1}$ or $Y_t=Y_{t+1}$ for any $t$, and then we stop.

\subsection{Lower quotas for sets of colleges}

We shall note that in the Hungarian application the lower quotas are set for pairs of programmes (actually for the same programme, just with separate quotas for state-financed and privately financed students). If the lower quota for a set of colleges is not met then all the colleges should be closed. This motivates the extended model where the lower quotas can be applied for sets of colleges. In this case we need only to introduce a new binary variable $o_p$ for each set of colleges $C_p$ with common lower quota $l_p$ that we associate with the other indicator variables of individual colleges as follows.

\begin{equation}
\label{eq:common_lower_feasible}
 o_p\cdot n_p\leq \sum_{j: c_j\in C_p} o_{j} \leq o_p\cdot n_p \mbox{ for each } C_p
\end{equation}

where $n_p$ is the number of colleges in $C_p$. We shall then set the feasibility conditions \eqref{eq:lower_feasible} for the sets of colleges with common lower quotas as follows.

\begin{equation}
\label{eq:lower_feasible_set}
o_p\cdot l_p\leq \sum_{i: (a_i,c_j)\in E} x_{ij}  \mbox{ for each } c_j\in C
\end{equation}

To define stability in this setting is not easy though. In section 6 we discuss some complications and potential solutions, basically by suggesting to drop the group-stability criteria in the extended models, like this.

\section{Common quotas}

This problem has also been defined in \cite{BFIM10tcs}. For each set of colleges $C_p\subseteq C$ the coordinator of the scheme may set a common upper quota, $u_p$, meaning that the total number of students admitted to colleges in $C_p$ cannot exceed this quota. Therefore, the set of feasibility constraints, \eqref{eq:applicant_feasible} and \eqref{eq:college_feasible}, has to be extended with some new constraints enforcing the common quotas, as follows.

\begin{equation}
\label{eq:common_college_feasible}
\sum_{(a_i,c_j)\in E, c_j\in C_p}x_{ij}\leq u_p \mbox{ for each } C_p\subseteq C
\end{equation}

Regarding stability, first of all we have to suppose that any two colleges, $c_j$ and $c_k$, that belong to a set of colleges $C_p$ with a common quota must rank their applicants in the same way. In particular, in the Hungarian application any student $a_i$ has the same score at such colleges (i.e.\ programmes in Hungary) with common quota, so $s_{ij}=s_{ik}$ holds. (In a more general model, we should suppose to have a specific scoring for each set of colleges with common quota, which is in agreement with the individual scorings of the colleges belonging to this set. For instance, we could have a score $s_{ij}^p$ for each application associated to a set of colleges $C_p$ with a common quota such that $s_{ij}>s_{lj}$ implies $s_{ij}^p>s_{lj}^p$).

In this setting stability means that if a student $a_i$ is not admitted to a college $c_j$ or to any better college of her preference then either $c_j$ must have filled its quota with better students or there is a set of colleges $C_p$, such that $c_j\in C_p$ and all the $u_p$ places in $C_p$ have been filled with better students than $a_i$. Bir\'o et al.\ \cite{BFIM10tcs} showed that if the sets of colleges with common upper quotas is nested, i.e., when $C_p\cap C_q\neq \emptyset$ implies either $C_p\subset C_q$ or $C_p\supset C_q$, then a stable matching always exists. Moreover, a stable matching can be found efficiently by the generalised Gale-Shapley algorithm and there are applicant and college optimal solutions. However, if the set system is not nested then stable solution may not exist and the problem of finding a stable matching is NP-hard. Interestingly, the Hungarian application involved nested set systems until 2007 when a legislative change modified the structure of the underlying model and made the set system non-nested, with the possibility of having no stable solution and also making the problem computationally hard.

Here, we show that we can express this stability condition with the use of score-limits, in a similar fashion to the method we described in section 2. However, here we need to assume that there are no ties. We set a score-limit $t_p$ for each set of colleges $C_p$ with common quota, which is less than or equal to the score of the weakest admitted student if the common quota is filled, and 0 if the common quota is unfilled in the matching. When describing the model in this way, stability implies that if a student $a_i$ is admitted to college $c_j$ then $s_{ij}\geq t_j$ and also $s_{ij}\geq t_p$ for any set of colleges $C_p$ with common quota where $C_p$ includes $c_j$. Furthermore, if $a_i$ is not admitted to $c_j$ or to any better college of her preference then it must be the case that either $s_{ij}<t_j$ or $s_{ij}<t_p$ for some set of colleges $C_p$ with common quota where $C_p$ contains $c_j$. These conditions can be formalised with the following set of conditions, where $q_{j}$ denotes the number of sets of colleges with common quota involving college $c_j$, $\{c_j\}$ also being one of them.

\begin{equation}
\label{eq:common_score_stable_college}
t_p\leq \left(1-x_{ij}\right)\cdot (\bar{s}+1) +s_{ij} \mbox{ for each } (a_i,c_j)\in E \mbox{ and } c_j\in C_p
\end{equation}
and
\begin{equation}
\label{eq:common_score_stable_applicant}
s_{ij}+1\leq t_p +\left(\sum_{k: r_{ik}\leq r_{ij}}x_{ik}+y_{i}^p\right)\cdot (\bar{s}+1) \mbox{ for each } (a_i,c_j)\in E \mbox{ and } c_j\in C_p
\end{equation}
with
\begin{equation}
\label{eq:common_score_stable_exception}
\sum_{p:c_j\in C_p} y_{i}^p\leq q_j-1 \mbox{ for each } (a_i,c_j)\in E
\end{equation}

where $y_{i}^p\in \{0,1\}$ is a binary variable. These conditions are needed to establish the links between a matching and the corresponding score-limits. However, for stability we also have to ensure that the score-limits are minimal. In case of strict preferences (i.e., when no two students have the same score at colleges belonging to a set of colleges with a common quota), we can ensure the minimality of the score-limits with the following conditions.

Again, we introduce an indicator variable $f_p$ for each set of colleges $C_p$ which is equal to zero if the common quota of these colleges is unfilled, by using the following constraints.

\begin{equation}
\label{eq:common_score_stable_filled1}
f_p\cdot u_p\leq \sum_{i: (a_i,c_j)\in E, c_j\in C_p}x_{ij} \mbox{ for each } C_p\subseteq C
\end{equation}

Then we ensure that if a college or a set of colleges with a common quota is unfilled then its score-limit is zero.

\begin{equation}
\label{eq:common_score_stable_filled2}
t_p\leq f_p(\bar{s}+1) \mbox{ for each } C_p\subseteq C
\end{equation}

We describe and prove the correctness of the IP model in the following Theorem.

\begin{theorem}
Feasibility conditions \eqref{eq:applicant_feasible}, \eqref{eq:college_feasible} and \eqref{eq:common_college_feasible}, with the stability conditions \eqref{eq:common_score_stable_college}, \eqref{eq:common_score_stable_applicant} and \eqref{eq:common_score_stable_exception}, together with \eqref{eq:common_score_stable_filled1} and \eqref{eq:common_score_stable_filled2} describe an integer program such that its solutions correspond to stable matching for the College Admissions problem with common quotas.
\end{theorem}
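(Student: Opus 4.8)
The plan is to follow the same two-part scheme used in the proof of the no-ties score-limit theorem for \eqref{eq:score_stable_college}--\eqref{eq:score_stable_filled2}, treating each common-quota set $C_p$ -- including every singleton $\{c_j\}$, which is counted among the $q_j$ sets containing $c_j$ -- as a ``generalised college'' carrying its own score-limit $t_p$. First I would observe that \eqref{eq:applicant_feasible}, \eqref{eq:college_feasible} and \eqref{eq:common_college_feasible} hold for a $0/1$ assignment of the $x_{ij}$ exactly when the associated matching is feasible, i.e.\ no applicant is over-assigned and neither the individual nor the common upper quotas are exceeded. The substance of the argument is then to show that the remaining constraints are satisfiable by \emph{some} choice of the auxiliary variables $t_p, f_p, y_i^p$ if and only if the matching is stable in the sense spelled out before the statement: for every application $(a_i,c_j)$ with $a_i$ unassigned or preferring $c_j$ to her assignment, there is a set $C_p\ni c_j$ all of whose $u_p$ seats are filled by applicants with strictly higher score than $a_i$. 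Throughout I use the standing assumption that $s_{ij}$ is constant across $c_j\in C_p$; write it $s_i^p$.

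For the forward direction I would start from a stable matching and set $t_p$ to be the score of the weakest applicant admitted into $C_p$ when the common quota is filled and $t_p=0$ otherwise, and put $f_p=1$ precisely when $C_p$ is filled. This makes \eqref{eq:common_score_stable_filled1} and \eqref{eq:common_score_stable_filled2} hold, and, since an admitted applicant has score at least that of the weakest admitted one, also \eqref{eq:common_score_stable_college}. The delicate point is the exception variables. I would set $y_i^p=1$ exactly when $s_i^p\ge t_p$ and $a_i$ prefers some college of $C_p$ to her assignment, and $y_i^p=0$ otherwise. Because $s_i^p$ is constant on $C_p$, all the genuinely binding instances of \eqref{eq:common_score_stable_applicant} for a fixed pair $(a_i,C_p)$ collapse to the single inequality $s_i^p+1\le t_p+y_i^p(\bar s+1)$, so this choice satisfies \eqref{eq:common_score_stable_applicant}; and \eqref{eq:common_score_stable_exception} holds because for every application $(a_i,c_j)$ at least one containing set yields $y_i^p=0$ -- the singleton $\{c_j\}$ whenever $a_i$ does not prefer $c_j$ to her match, and a stability-provided set with $s_i^p<t_p$ whenever she does.

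For the converse I would take any integer-feasible assignment and read off the matching from the $x_{ij}$. Constraints \eqref{eq:common_score_stable_college}, \eqref{eq:common_score_stable_applicant} and \eqref{eq:common_score_stable_exception} together say that each applicant is placed at the best college on her list at which she clears every relevant score-limit, while for any college $c_j$ she prefers to her assignment there is a set $C_p\ni c_j$ with $s_{ij}<t_p$, namely the set whose exception bit is forced to $0$ by \eqref{eq:common_score_stable_exception}. Given such a would-be blocking application $(a_i,c_j)$, this $C_p$ has $t_p>s_{ij}\ge 0$, so $t_p>0$ forces $f_p=1$ through \eqref{eq:common_score_stable_filled2} and hence $C_p$ is filled by \eqref{eq:common_score_stable_filled1}; moreover \eqref{eq:common_score_stable_college} forces every occupant of a college in $C_p$ to have score at least $t_p>s_{ij}$. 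Thus all $u_p$ seats of $C_p$ are held by strictly better applicants, which contradicts blocking, so the matching is stable.

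I expect the main obstacle to be exactly the bookkeeping around the exception variables $y_i^p$: unlike the single-college model, one variable $y_i^p$ is shared by every application of $a_i$ into $C_p$, and one must check that the exception budget $q_j-1$ in \eqref{eq:common_score_stable_exception} can always be met simultaneously with \eqref{eq:common_score_stable_applicant}. The resolution rests on two facts that I would isolate as the crux of the proof: that the common scoring makes all binding copies of \eqref{eq:common_score_stable_applicant} for a pair $(a_i,C_p)$ identical, so the shared variable is never pulled in two directions; and that the singleton set $\{c_j\}$ is always available to absorb the exception whenever $a_i$ is assigned to $c_j$ or to a better college.
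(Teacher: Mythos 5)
Your proposal is correct and follows essentially the same argument as the paper: feasibility is immediate, the forward direction sets $t_p$ to the score of the weakest admitted applicant in each filled set $C_p$ (zero otherwise) and chooses the exception variables so that \eqref{eq:common_score_stable_applicant} and \eqref{eq:common_score_stable_exception} hold, and the converse extracts from \eqref{eq:common_score_stable_exception} a set $C_p\ni c_j$ with $y_i^p=0$, forcing $t_p>s_{ij}$ via \eqref{eq:common_score_stable_applicant} and hence, through \eqref{eq:common_score_stable_filled1}, \eqref{eq:common_score_stable_filled2} and \eqref{eq:common_score_stable_college}, a filled set occupied by strictly better applicants. The only (harmless) deviation is your choice of $y_i^p$ in the forward direction, which conditions on both meeting $t_p$ and desiring a college of $C_p$, whereas the paper sets $y_i^p=1$ whenever $a_i$ meets $t_p$, with a special rule only for the singleton set; both assignments satisfy the constraints.
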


\begin{proof}
To see the correctness, we have to note first that the matching is feasible if and only if the feasibility constraints \eqref{eq:applicant_feasible}, \eqref{eq:college_feasible} and \eqref{eq:common_college_feasible} are satisfied, and conditions \eqref{eq:common_score_stable_filled1} and \eqref{eq:common_score_stable_filled2} are satisfied if and only if the score-limit of any unfilled college or set of colleges is zero.

Now, suppose first that we have a stable solution and we show that all the stability conditions can be satisfied by setting the variables appropriately. When a common quota $C_p$ is filled we set $t_p$ to be equal to the last admitted applicant at any college included in $C_p$. This ensures that the first set of conditions \eqref{eq:common_score_stable_college} is satisfied.  Let $y_i^p=0$ if $a_i$ does not meet $t_p$ at any college $c_j\in C_p$ where she applied to, and $y_i^p=1$ otherwise, with the exception of set $\{c_j\}$, where we set $y_i^j=0$ if $i$ is admitted to $c_j$ or to a better place. The stability of the matching then implies \eqref{eq:common_score_stable_exception}. Finally, let us consider an application $(a_i,c_j)$ where $c_j\in C_p$. If $a_i$ is admitted to $c_j$ or to a better college then the corresponding constraint \eqref{eq:common_score_stable_applicant} is satisfied, irrespective of the value $y_i^p$. Otherwise, suppose that $a_i$ is not admitted to $c_j$ or to any better place. If $a_i$ does not meet the score-limit $t_p$ then \eqref{eq:common_score_stable_applicant} is satisfied, obviously, and it is also satisfied when she meets $t_p$, since $y_i^p=1$ in that case.

Conversely, suppose that we have a solution for the IP model, and we will show that this ensures the stability of the corresponding matching. If $(a_i,c_j)$ is in the matching then constraints \eqref{eq:common_score_stable_college} imply that $a_i$ achieves the score-limit of $c_j$ and also the score-limit of every set of colleges with common quota containing $c_j$. Finally, suppose that $a_i$ is not admitted to $c_j$ or to any better college of her preference. Since one of the additional variables of form $y_{i}^l$ must be zero, say $y_{i}^p$, the corresponding constraint \eqref{eq:common_score_stable_applicant} implies that the set of colleges $C_p$ containing $c_j$ has a score-limit $t_p$ greater than $s_{ij}$. So the matching is indeed stable.
\end{proof}

Finally we note that if we have ties then we shall use an objective function that minimises the sum of the score-limits or an extended model, similar to the ones described in section 2.

\section{Paired applications}

In the Hungarian application students can apply for pairs of programmes in case of teachers' studies, e.g.\ when they want to become a teacher in both maths and physics. In this setting of a College Admission problem with paired applications stability means that if a student is not admitted to a pair of colleges, or to any better college (or pair of colleges) in her list then either of these colleges must have filled its quota with better applicants. In terms of score-limits, either of these colleges must have a score-limit that this student has not achieved.

This problem is similar to the well-known Hospitals/Residents problem with Couples, where residents apply to pairs of positions (see a survey \cite{BK13igtr}). However, there are some slight differences. Here a student may have both simple and paired applications in her list, thus she can behave both as a single applicant and as a couple at the same time. A paired application in the college admission problem involves two distinct programmes, whilst a couple may apply for a pair of positions at the same hospital in a resident allocation program. So neither of these problems is more general than the other. Nevertheless both problems are NP-hard, since the NP-hardness proof of Ronn \cite{Ronn90je} for the Hospitals/Residents problem with couples was concerned with a special case where each hospital has one place only, and this is also a special case of the College Admissions problem with paired applications.

Note that the case of paired applications can be modeled in a similar way as in the case of common quotas. Each pair of colleges can be seen as an artificial college, and the upper quotas of the original colleges will become common quotas.

For completeness, we describe the IP model for college admissions with paired application.
We introduce new, artificial colleges $C^P$, that are pairs of compatible colleges. Let $c_{jk}\in C^P$ be one compatible pair of colleges, the indicator variable of an application from student $a_i$ to this pair of colleges is denoted by $x_{i(jk)}$, and the rank of this application in $a_i$'s list is denoted by $r_{i(jk)}$. For a more convenient notation, let $e\in E^S$ denote a simple application, and $e\in E^P$ denote a paired application, where the set of all applications is $E=E^S\cup E^P$. With a slight abuse of notation, let $a_i\in e$ mean that application $e$ is coming from applicant $a_i$, and $c_j\in e$ or $c_{(jk)}\in e$ mean that the application is going to college $c_j$ or pair of colleges $c_{(jk)}$, respectively.

The feasibility constraints can now be modified in the following way.

\begin{equation}
\label{eq:paired_applicant_feasible}
\sum_{e: a_i\in e\in E}x_{e}\leq 1 \mbox{ for each } a_i\in A
\end{equation}

\begin{equation}
\label{eq:paired_college_feasible}
\sum_{e: c_j\in e\in E^S}x_{e} + \sum_{k=1\dots m}\sum_{e: c_{(jk)}\in e\in E^P}x_{e} \leq u_j \mbox{ for each } c_j\in C
\end{equation}

The stability conditions are expressed with score-limits. There is no change for simple applications.

\begin{equation}
\label{eq:simple_score_stable_college}
t_j\leq \left(1-x_{ij}\right)\cdot (\bar{s}+1) +s_{ij} \mbox{ for each } (a_i,c_j)\in E^S
\end{equation}
and
\begin{equation}
\label{eq:simple_score_stable_applicant}
s_{ij}+1\leq t_j +\left(\sum_{e: a_i\in e, r_{e}\leq r_{ij}}x_{e}\right)\cdot (\bar{s}+1) \mbox{ for each } (a_i,c_j)\in E^S
\end{equation}

For paired applications, the following conditions should hold.

\begin{equation}
\label{eq:paired_score_stable_college1}
t_j\leq \left(1-x_{i(jk)}\right)\cdot (\bar{s}+1) +s_{ij} \mbox{ for each } (a_i,c_{(jk)})\in E^P
\end{equation}
\begin{equation}
\label{eq:paired_score_stable_college2}
t_k\leq \left(1-x_{i(jk)}\right)\cdot (\bar{s}+1) +s_{ik} \mbox{ for each } (a_i,c_{(jk)})\in E^P
\end{equation}
and
\begin{equation}
\label{eq:paired_score_stable_applicant1}
s_{ij}+1\leq t_j +\left(\sum_{e: a_i\in e, r_{e}\leq r_{i(jk)}}x_{e}+y_i^{(jk)}\right)\cdot (\bar{s}+1) \mbox{ for each } (a_i,c_{(jk)})\in E^P
\end{equation}
\begin{equation}
\label{eq:paired_score_stable_applicant2}
s_{ij}+1\leq t_k +\left(\sum_{e: a_i\in e, r_{e}\leq r_{i(jk)}}x_{e}+(1-y_i^{(jk)})\right)\cdot (\bar{s}+1) \mbox{ for each } (a_i,c_{(jk)})\in E^P
\end{equation}

Here, the first two conditions, \eqref{eq:paired_score_stable_college1} and \eqref{eq:paired_score_stable_college2}, ensure that the score limit is met for each of the colleges where an applicant is admitted with a paired application. The second two conditions, \eqref{eq:paired_score_stable_applicant1} and \eqref{eq:paired_score_stable_applicant2}, imply that if a student $a_i$ is not admitted to a pair of colleges $c_{(jk)}$ or to any better place(s) then it must be the case that she did not achieve the score-limit at one of these colleges. Here again, $y_i^{(jk)}\in \{0,1\}$ is a binary variable.

Finally, similarly to conditions \eqref{eq:common_score_stable_filled1} and \eqref{eq:common_score_stable_filled2}, we have to ensure that only the filled colleges can have positive score-limits.

\begin{equation}
\label{eq:paired_score_stable_filled1}
f_j\cdot u_j\leq \sum_{e: c_j\in e\in E^S}x_{e} + \sum_{k=1\dots m}\sum_{e: c_{(jk)}\in e\in E^P}x_{e} \mbox{ for each } c_j\in C
\end{equation}

Then we ensure that if a college or a set of colleges with a common quota is unfilled then its score-limit is zero.

\begin{equation}
\label{eq:paired_score_stable_filled2}
t_j\leq f_j(\bar{s}+1) \mbox{ for each } c_j\in C
\end{equation}

\begin{theorem}
The College Admissions problem with paired applications can be described with the solutions of the IP consisting of conditions \eqref{eq:paired_applicant_feasible} to \eqref{eq:paired_score_stable_filled2}.
\end{theorem}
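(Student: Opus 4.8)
The plan is to reuse, essentially verbatim, the three-part template of the common-quota theorem, handling feasibility, the matching-to-score-limit link, and stability in turn, and then checking the two directions of the correspondence; the only genuinely new ingredient is the single binary selector $y_i^{(jk)}$ attached to each paired application. First I would observe that a matching is feasible if and only if \eqref{eq:paired_applicant_feasible} and \eqref{eq:paired_college_feasible} hold: the former forbids any applicant from taking more than one assignment, whether simple or paired, and the latter enforces each upper quota $u_j$, where the double sum over $E^P$ correctly charges a seat at $c_j$ for every paired assignment that uses $c_j$. Exactly as in Sections~2 and~4, conditions \eqref{eq:paired_score_stable_filled1} and \eqref{eq:paired_score_stable_filled2} permit $f_j=1$ only when $c_j$ is full and $t_j>0$ only when $f_j=1$, so a positive score-limit at $c_j$ certifies that $c_j$ is full of students who all have score at least $t_j$ (the latter via \eqref{eq:simple_score_stable_college}, \eqref{eq:paired_score_stable_college1} and \eqref{eq:paired_score_stable_college2}).

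For the simple applications, conditions \eqref{eq:simple_score_stable_college} and \eqref{eq:simple_score_stable_applicant} are exactly the ones from Section~2.1, so I would simply invoke the single-college score-limit theorem: a simple application blocks nothing precisely when these hold. The crux is the pair \eqref{eq:paired_score_stable_applicant1}--\eqref{eq:paired_score_stable_applicant2} together with $y_i^{(jk)}$, which jointly encode the disjunction ``$a_i$ fails the score-limit at $c_j$ or at $c_k$''. The key mechanical observation is that when $a_i$ is admitted to $c_{(jk)}$ or to a more preferred place the proposal-sum $\sum_{e:a_i\in e,\ r_e\le r_{i(jk)}}x_e$ equals $1$ and both constraints are slack irrespective of $y_i^{(jk)}$; and when she is admitted to no such place the proposal-sum is $0$, so the coefficient $(\bar{s}+1)$ switches exactly one constraint off and leaves the other active, namely the $c_k$-constraint when $y_i^{(jk)}=1$ (forcing $s_{ik}<t_k$) and the $c_j$-constraint when $y_i^{(jk)}=0$ (forcing $s_{ij}<t_j$).

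With this in hand I would run both directions. Given a stable matching, set $t_j$ to the score of the weakest admitted student at each filled college and $0$ otherwise, set each $f_j$ accordingly, and for every paired application $(a_i,c_{(jk)})$ at which $a_i$ is admitted neither to $c_{(jk)}$ nor to anything she prefers, use stability to choose $y_i^{(jk)}=0$ when $a_i$ fails the score-limit at $c_j$ and $y_i^{(jk)}=1$ when she fails at $c_k$; stability guarantees at least one of these two options is available, and the remaining constraints \eqref{eq:paired_score_stable_college1}--\eqref{eq:paired_score_stable_applicant2} are then all satisfied. Conversely, given any integer feasible solution, suppose some paired application $(a_i,c_{(jk)})$ blocked, so that $a_i$ prefers $c_{(jk)}$ to her assignment and meets both $t_j$ and $t_k$; since $y_i^{(jk)}\in\{0,1\}$, one of \eqref{eq:paired_score_stable_applicant1}, \eqref{eq:paired_score_stable_applicant2} is active and yields $s_{ij}<t_j$ or $s_{ik}<t_k$, and because a positive score-limit certifies a full college of strictly better students, $a_i$ cannot in fact enter that college --- a contradiction. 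Combined with the simple-application case this shows the matching is stable.

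The main obstacle, and the only point demanding care, is verifying that a single selector $y_i^{(jk)}$ genuinely captures the full disjunction while never over-constraining an applicant who is admitted to the pair or better: one must check in each case that the $(\bar{s}+1)$ coefficients turn exactly one of the two applicant-side constraints on and the other off. I would also flag that, for the disjunction to be symmetric in the two colleges of the pair, the left-hand side of \eqref{eq:paired_score_stable_applicant2} should read $s_{ik}+1$ rather than $s_{ij}+1$, since that constraint must compare $a_i$'s score at $c_k$ against the score-limit $t_k$.
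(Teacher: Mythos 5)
Your proposal is correct and takes essentially the same route as the paper's own proof, namely a constraint-by-constraint verification that the IP encodes feasibility, the link between the matching and the score-limits, and the requirement that only filled colleges may reject; your version is merely more explicit about the two directions and about how the selector $y_i^{(jk)}$ realises the disjunction. Your final flag is also a genuine catch: the left-hand side of \eqref{eq:paired_score_stable_applicant2} should indeed read $s_{ik}+1$ rather than $s_{ij}+1$, which is a typo in the paper.
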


\begin{proof}
Conditions \eqref{eq:paired_applicant_feasible} and \eqref{eq:paired_college_feasible} are satisfied if and only if the corresponding matching is feasible. Regarding the stability constraints, if we consider a single application then \eqref{eq:simple_score_stable_college} ensures that the applicant has reached the score-limit of the college where she has been admitted, and \eqref{eq:simple_score_stable_applicant} implies that a rejection must have taken place because the applicant has not reached the score-limit of that college. Similarly, for a paired application \eqref{eq:paired_score_stable_college1} and \eqref{eq:paired_score_stable_college2} are satisfied if and only if the applicant has achieved the score-limits of both colleges where she has been admitted in a paired application. Conditions \eqref{eq:paired_score_stable_applicant1} and \eqref{eq:paired_score_stable_applicant2} ensure that if a paired application is rejected, it must be the case that the applicant has not reached the score-limit of either of the colleges in her paired application. To summarise, each applicant is admitted to the best college or pair of colleges on her list where she has achieved the score-limit(s). Finally, conditions \eqref{eq:paired_score_stable_filled1} and \eqref{eq:paired_score_stable_filled2} imply that only those colleges which are filled may reject applications.
\end{proof}

\section{Combining the models into a single IP}

In the four previous sections we have developed IP models to deal with each of the four special features which are present in the Hungarian application. However, all of these special features are present in the application simultaneously, and so to provide a solution for real data we need to create a combined model which incorporates all of the constraints. However, this task is not easy, since not only the constraints have to be adjusted, but sometimes the stability definitions may also contradict each other. Since paired applications can be seen as a special case of common quotas, as we described in the previous section, we focus on the different combinations of ties, lower and common quotas.

\subsection{Stable score-limits with ties and lower quotas}

The feasibility of the matching is characterised by constraints \eqref{eq:applicant_feasible} and \eqref{eq:lower_feasible}. For stability, we shall use \eqref{eq:score_stable_college} to ensure that a student is only admitted if she achieved the score-limit. We need \eqref{eq:lower_stable2} again for group-stability. Here we need a combination of \eqref{eq:score_stable_applicant} and \eqref{eq:lower_stable1} to enforce stability for open colleges, as follows.

\begin{equation}
\label{eq:lower_score_stable_applicant}
s_{ij}+1\leq t_j +\left(\sum_{k: r_{ik}\leq r_{ij}}x_{ik} +1-o_j\right)\cdot (\bar{s}+1) \mbox{ for each } (a_i,c_j)\in E
\end{equation}

Finally, either we need to minimise the sum of score-limits with the objective function or to use conditions \eqref{zero}, \eqref{desires}, \eqref{deserves} and \eqref{stable with no obj} to ensure that the decrease of any positive score-limit in the solution would cause the violation of a quota.

\subsection{Stable score-limits with ties and common quotas}

As we have mentioned in section 4, our IP model for common quotas are based on score-limits, so it is not too difficult to reconcile these two features. The issue that we must resolve is that we cannot require the unfilled colleges or sets of colleges to have zero score-limit here, as this might have resulted from the rejection of a group of applicants with the same score. So instead of using \eqref{eq:common_score_stable_filled1} and \eqref{eq:common_score_stable_filled2}, either we have to minimise the sum of the score-limits with the objective function to enforce stability, or we shall develop a model similar to the one described in subsection 2.3 for sets of colleges with common quotas.

\subsection{The difficulty in reconciling lower and common quotas}

The feasibility of the matching is now characterised by constraints \eqref{eq:applicant_feasible}, \eqref{eq:lower_feasible} and \eqref{eq:common_college_feasible}. For stability, we need \eqref{eq:common_score_stable_college} to ensure that a student is only admitted if she achieved the score-limit of the college assigned and also the score-limit for each set of colleges with a common quota that contains this college. For open colleges, the pairwise stability condition can be enforced with the following modification of \eqref{eq:common_score_stable_applicant}:

\begin{equation}
\label{eq:common_lower_score_stable_applicant}
s_{ij}+1\leq t_p +\left(\sum_{k: r_{ik}\leq r_{ij}}x_{ik}+y_{i}^p+1-o_j\right)\cdot (\bar{s}+1) \mbox{ for each } (a_i,c_j)\in E \mbox{ and } c_j\in C_p
\end{equation}
together with \eqref{eq:common_score_stable_exception}.

However, defining group-stability is problematic here. Since even if there is a closed college with more unsatisfied students than its lower quota the admission of these students could lead to the violation of a common upper quota. In which case this block should not be allowed. So combining the constraints for lower and common quotas is challenging. Therefore one might consider the possibility of abandoning the group-stability constraints related to lower quotas. However, in this case, a possible solution would be to close all the colleges. This unsatisfactory scenario could be avoided by setting the objective function such that the number of students admitted becomes the first priority and the minimisation of the total score limits is the second priority.

\subsection{Stable score-limits with ties, lower and common quotas}

As we have described, the concept of stable stable score-limits for College Admissions problem with ties can be reconciled with both lower and common quotas, however the latter two are difficult to combine. Therefore in the practical application the organisers should decide what stability conditions they want to satisfy in the first place. One possibility is to drop the group-stability conditions with regard to lower quotas and to use an appropriate objective function to ensure that the solution is not biased. In order to set the right conditions one should do simulations with the real data to see how the model, and the objective function in particular, may influence the resulting solution.

\section*{Further notes}

As we described in the previous section, the combination of the four special features result in interesting challenges in two ways. First, we need to define appropriate stability criteria when both lower and upper quotas are present. Second, we need to combine the separate integer programmes into a single programme that would result in a suitable solution for the real application.

It would be also important to know whether these IP formulations may be solved within a realistic timescale for such a large scale application as the Hungarian higher education matching schemes, with around 100000 applicants. Our plan is to conduct experiments on real data we have access to, which comes from the 2008 match run of the Hungarian higher education scheme. 

Furthermore, one could always try to solve our special college admissions problems with other approaches, e.g. with different integer programming formulations. Finally, our models may be useful in other applications as well, such as controlled school choice (see e.g. \cite{AE07wp}), resident allocation with distributional constraints (see e.g. \cite{KK12aer}), or for finding stable solutions with additional restrictions, such as matchings with no Pareto-improving swaps \cite{Irving08joco}.

\end{document}